
\typeout{IJCAI--22 Multiple authors example}


\documentclass{article}
\pdfpagewidth=8.5in
\pdfpageheight=11in
\usepackage{ijcai22}

\usepackage{times}

\usepackage{soul}
\usepackage{url}
\usepackage[hidelinks]{hyperref}
\usepackage[utf8]{inputenc}
\usepackage[small]{caption}
\usepackage{graphicx}
\usepackage{amsmath}
\usepackage{amsfonts}
\usepackage{amssymb}
\usepackage{amsthm}
\usepackage{booktabs}
\usepackage{stackrel}
\usepackage{algorithm}
\usepackage{algpseudocode}
\usepackage{subfigure}
\usepackage{epstopdf}
\usepackage{wrapfig}
\usepackage{tikz,pgfplots}
\usepackage{pgfplotstable}
\usepackage{balance}
\urlstyle{same}

\newcommand{\calL}{\mathcal{L}} 
\newcommand{\calB}{\mathcal{B}} 
\newcommand{\weight}{\omega}
\newcommand{\posI}{\mathcal{I}^{+}} 
\newcommand{\negI}{\mathcal{I}^{-}} 
\newcommand{\bfU}{\bar U}
\newcommand{\bfI}{\bar I} 
\newcommand{\bR}{\mathbb{R}} 

\newcommand{\prefu}{\stackrel[u]{}{\succ}}

\newcommand{\NetF}{\textsc{Netflix}}
\newcommand{\RecS}{\textsc{RecSys'16}}
\newcommand{\Out}{\textsc{Outbrain}}
\newcommand{\ML}{\textsc{ML}}
\newcommand{\kasandr}{\textsc{Kasandr}}
\newcommand{\PANDOR}{\textsc{Pandor}}

\newtheorem{theorem}{Theorem}

\newtheorem{assumption}{Assumption}

\newcommand{\SO}{\texttt{SAROS}}
\newcommand{\BPR}{\texttt{BPR}}
\newcommand{\MostPop}{\texttt{MostPop}}

\newcommand{\MF}{\texttt{MF}}
\newcommand{\GRU}{\texttt{GRU4Rec}}
\newcommand{\ProdVec}{\texttt{Prod2Vec}}

\newcommand{\batch}{\texttt{BPR$_b$}}
\newcommand{\caser}{\texttt{Caser}}
\newcommand{\GCN}{\texttt{LightGCN}}
\newcommand{\SASR}{\texttt{SASRec}}

\newcommand{\mapk}{\texttt{MAP@K}}

\newcommand{\mapfive}{\texttt{MAP@5}}
\newcommand{\mapten}{\texttt{MAP@10}}
\newcommand{\ndcgfive}{\texttt{NDCG@5}}
\newcommand{\ndcgten}{\texttt{NDCG@10}}
\newcommand{\ndcgk}{\texttt{NDCG@K}}





\pdfinfo{
/TemplateVersion (IJCAI.2022.0)
}

\title{Learning over No-Preferred and Preferred Sequence of Items for Robust Recommendation (Extended Abstract)}

\author{
Aleksandra Burashnikova$^1$\footnote{Contact Author}\and
Yury Maximov$^{2}$\and
Marianne Clausel$^{3}$\and
Charlotte Laclau$^4$\\
Franck Iutzeler$^5$\And Massih-Reza Amini$^5$\\
\affiliations
$^1$Skolkovo Institute of Science and Technology\\
$^2$ Los Alamos National Laboratory\\
$^3$ University of Lorraine\\
$^4$ Telecom Saint-Etienne \\
$^5$ University Grenoble Alpes\\
\emails
Aleksandra.Burashnikova@skoltech.ru,
yury@lanl.gov,
Marianne.Clausel@univ-lorraine.fr,
Charlotte.laclau@univ-st-etienne.fr,
franck.iutzeler@univ-grenoble-alpes.fr,
Massih-Reza.Amini@univ-grenoble-alpes.fr
}

\begin{document}

\maketitle

\begin{abstract}
This paper is an extended version of \cite{DBLP:journals/jair/BurashnikovaMCL21}, where we proposed a theoretically supported sequential strategy for training a large-scale Recommender System (RS) over implicit feedback, mainly in the form of clicks. The proposed approach consists in minimizing pairwise ranking loss over blocks of consecutive items constituted by a sequence of non-clicked items followed by a clicked one for each user. We present two variants of this strategy where model parameters are updated using either the momentum method or a gradient-based approach. To prevent updating the parameters for an abnormally high number of clicks over some targeted items (mainly due to bots), we introduce an upper and a lower threshold on the number of updates for each user. These thresholds are estimated over the distribution of the number of blocks in the training set. They affect the decision of RS by shifting the distribution of items that are shown to the users. Furthermore, we provide a convergence analysis of both algorithms and demonstrate their practical efficiency over six large-scale collections with respect to various ranking measures.
\end{abstract}

\section{Introduction}

The paper presents two variants of a sequential learning strategy for recommender systems with implicit feedback. The first approach, referred to as $\SO_m$, updates the model parameters at each time a block of unclicked items followed by a clicked one is formed after a user's interaction. Parameters' updates are carried out by minimizing the average ranking loss of the current model that scores the clicked item below the unclicked ones using a momentum method \cite{PolyakB}.
The second strategy, which we refer to as $\SO_b$, updates the model parameters by minimizing a ranking loss over the same blocks of unclicked items followed by a clicked one using a gradient descent approach; with the difference that parameter updates are discarded for users who interact very little or a lot with the system.

We present a unified framework in which we investigate the convergence qualities of both variants of \SO{} in the broad situation of non-convex ranking losses in this research.
The letter builds on our previous findings \cite{Burashnikova19}, which focused exclusively on the  convergence of  $\SO_b$ in the scenario of convex ranking losses. Furthermore, we provide empirical evaluation over six large publicly available datasets showing that both versions of $\SO_{}$ are highly competitive compared to the state-of-the-art models in terms of quality metrics. 

\section{Framework and Problem Setting}

\subsection{Learning Objective}
\label{sec:LO}
Our objective here is to minimize an expected error penalizing the misordering of all pairs of interacted items $i\in\posI_u$ and $i'\in\negI_u$ for a user $u$ where the set of preferred and non-preferred items denoted by $\posI_u$ and $\negI_u$, respectively. Commonly, this objective is given under the Empirical Risk Minimization (ERM) principle, by minimizing the empirical ranking loss estimated over the items and the final set of users who interacted with the system:
\begin{align}\label{eq:RL}
    \hat{\calL}_u(\weight)\!=\!\frac{1}{|\posI_u||\negI_u|}\!\sum_{i\in \posI_u}\!\sum_{i'\in \negI_u} \!\ell_{u,i,i'} (\weight),
\end{align}
\noindent where, $\ell_{u, i, i'} (.)$ is an instantaneous ranking loss defined over the triplet $(u,i,i')$ with the user $u$ prefers item $i$ over item $i'$; symbolized by the relation $i\!\prefu\! i'$. Hence,  $\hat{\cal L}_u(\weight)$ is the pairwise ranking loss with respect to user's interactions and ${\cal L}(\weight) = \mathbb{E}_u \left[\hat{\cal L}_u(\weight)\right]$ is the expected ranking loss, where $\mathbb{E}_{u}$ is the expectation with respect to users chosen randomly according to the marginal distribution. 

Each user $u$ and each item $i$ are represented respectively by vectors $\bfU_u\in\bR^k$ and $\bfI_i\in\bR^k$ in the same latent space of dimension $k$. The set of weights to be found $\weight=(\bfU,\bfI)$, are then matrices formed by the vector representations  of users $\bfU=(\bfU_u)_{u\in [N]}\in\bR^{N\times k}$ and items $\bfI=(\bfI_i)_{i\in[M]}\in\bR^{M\times k}$. 
The instantaneous loss, $\ell_{u,i,i'}$, is the surrogate regularized logistic  loss for some hyperparameter $\mu \ge 0$:
\begin{align}
\label{eq:instloss}
    \ell_{u,i,i'}(\weight) =  \log\left(1+e^{-y_{u,i,i'}\bfU_u^\top(\bfI_{i}-\bfI_{i'})}\right)+\\
    \mu (\|\bfU_u\|_2^2+\|\bfI_{i}\|_2^2 + \|\bfI_{i'}\|_2^2)\nonumber
\end{align}

In the case where one of the chosen items is preferred over the other one (i.e., $y_{u,i,i'}\in\{-1,+1\}$ and $y_{u,i,i'}=+1$ iff   $i\!\prefu\! i'$), the algorithm  updates the weights using the stochastic gradient descent method over the instantaneous loss \eqref{eq:instloss}.

\subsection{Algorithm \SO{}}
\label{sec:Algo}

 A key point in recommendation is that user preferences for items are largely determined by the context in which they are presented to the user. This effect of local preference is not taken into account by randomly sampling triplets formed by a user and corresponding clicked and unclicked items over the entire set of shown items. Furthermore, triplets corresponding to different users are non uniformly distributed, as interactions vary from one user to another one, and for parameter updates; triplets corresponding to low interactions have a small chance to be chosen. In order to tackle these points; we propose to update the parameters sequentially over the blocks of non-preferred items followed by preferred ones for each user $u$. 

In this case, at each time $t$ a block $\calB_u^t=\text{N}_u^{t}\sqcup\Pi_u^{t}$ is formed for user $u$; weights are updated by miniminzing the ranking loss corresponding to this block~:

\begin{equation}
\label{eq:CLoss}
    {\hat {\calL}}_{\calB_u^t}(\weight_u^{t}) = \frac{1}{|\Pi_u^{t}||\text{N}_u^{t}|}\sum_{i \in \Pi_u^{t}} \sum_{i'\in \text{N}_u^{t}} \ell_{u, i, i'} ({\weight}_u^{t}).
\end{equation}

We propose two strategies for the minimization of (Eq. \ref{eq:CLoss}) and the update of weights. In the first one, referred to as {\SO$_m$}, the aim is to carry out an effective minimization of the ranking loss \eqref{eq:CLoss} by lessening the oscillations of the updates through the minimum. This is done by defining the updates as the linear combination of the gradient of the loss of (Eq. \ref{eq:CLoss}), $\nabla  \widehat{\calL}_{{\mathcal B}^t_u}(w_u^{t})$, and the previous update as in the momentum technique at each iteration $t$~:

\begin{align}
\label{thm11:eq1}
    v_u^{t+1}&=\mu\cdot v_u^{t}+(1-\mu)\nabla  \widehat{\calL}_{{\mathcal B}^t_u}(w_u^{t})\\
    w_u^{t+1}&= w_u^{t}-\alpha v_u^{t+1}
\end{align}

\noindent 
where $\alpha$ and $\mu$ are hyperparameters of the linear combination. In order to explicitly take into account bot attacks -- in the form of excessive clicks over some target items -- we propose a second variant of this strategy, referred to as {$\SO_b$}. This variant consists in fixing two thresholds $b$ and $B$  over the parameter updates. For a new user $u$, model parameters are updated if and only if the number of  blocks of items constituted for this user is within the interval $[b,B]$. 

 The initial weights of the algorithms $\weight_1^0$ are chosen randomly for the first user. The sequential update rule of {$\SO_b$}, for each current user $u$ consists in updating the weights by making one step towards the opposite direction of the gradient of the ranking loss estimated on the current block,  $\calB_u^t=\text{N}_u^{t}\sqcup\Pi_u^{t}$~:

\begin{equation}
\weight_u^{t+1} = \weight_u^t - \frac{\eta}{|{\text{N}_u^{t}}||\Pi_u^{t}|}  \displaystyle{\sum_{i\in\Pi_u^{t}}\sum_{i'\in \text{N}_u^{t}}} \nabla \ell_{u,i,i'} (\weight_{u}^t)
\end{equation}

For a given user $u$, parameter updates are discarded if the number of blocks $(\calB_u^t)_t$ for the current user falls outside the interval $[b,B]$.
In this case, parameters are initialized with respect to the latest update before user $u$ and they are updated with respect to a new user's interactions.

\subsection{Convergence Analysis}
\label{sec:TA}

The proofs of algorithms' convergence are given under a common hypothesis that the sample distribution is not instantaneously affected by learning of the weights, i.e. the samples can be considered as i.i.d. More precisely, we assume the following hypothesis.

\begin{assumption}\label{ass:1}
    For an i.i.d. sequence of user and any $u, t \ge 1$, we have 
    \begin{enumerate}
        \item 
        $\mathbb{E}_{(u,{\cal B}_u^t)} \|\nabla {\calL}(\omega_u^t) - \nabla \hat{\calL}_{{\cal B}_u^t}(\omega_u^t)\|_2^2 \le \sigma^2$,
        \item  For any $u$,  $\left|\mathbb{E}_{ {{\cal B}_u^t}|u} \langle\nabla {\calL}(\omega_u^t), \nabla {\calL}(\omega_u^t) -  \nabla \hat{\calL}_{{\cal B}_u^t}(\omega_u^t) \rangle \right| \le a^2 \|\nabla {\calL}(\omega_u^t)\|_2^2$
    \end{enumerate}
    for some parameters $\sigma>0$ and $a\in [0,1/2)$ independent of $u$ and $t$. 
\end{assumption}

The first assumption is common in stochastic optimization and it implies consistency of the sample average approximation of the gradient. However, this assumption is not sufficient to prove the convergence because of interdependency of different blocks of items for the same user. 

The second assumption implies that in the neighborhood of the optimal point, we have $\nabla {\calL}(\omega_u^t)^\top \nabla \hat{\calL}_{{\cal B}_u^t}(\omega_u^t) \approx \|\nabla {\calL}(\omega_u^t)\|_2^2$, which greatly helps to establish consistency and convergence rates for both variants of the methods. 

The following theorem establishes the convergence rate for the \SO{}$_b$ algorithm. 

\begin{theorem}\label{thrm:new-01}
Let $\ell$ be a (possibly non-convex) $\beta$-smooth loss function. Assume, moreover,  that the number of interactions per user belongs to an interval $[b, B]$ almost surely and assumption~\ref{ass:1} is satisfied with some constants $\sigma^2$ and $a$, $0 < a < 1/2$. 
Then, for a step-size policy $\eta_u^t \equiv \eta_u$ with $\eta_u\leq 1/(B\beta)$ for any user $u$, one has
\begin{align}
\min_{1\le u \le N} & \mathbb{E}    \|\nabla {\calL}(\omega_u^0)\|_2^2 \le \nonumber\\ & \frac{2({\calL}(\omega_1^0) - {\calL}(\omega_u^0)) + \beta \sigma^2 \sum_{u=1}^N \sum_{t=1}^{|{\cal B}_u|}(\eta_u^t)^2}{\sum_{u=1}^N \sum_{t=1}^{|{\cal B}_u|} \eta_u^t(1 - a^2 - \beta \eta_u^t(1/2 - a^2))}
\end{align}
In particular, for a constant step-size policy $\eta_u^t = \eta = c/\sqrt{N}$ satisfies $\eta \beta \le 1$, one has 
\begin{align*}
    \min_{t, u} \|\nabla {\calL}(\omega_{u}^t)\|_2^2 \le 
    \frac{2}{b} \frac{2 ({\calL}(\omega_1^0) - {\calL}(\omega_*))/c + \beta c \sigma^2 B}{(1-4a^2)\sqrt{N}}. 
\end{align*}
\end{theorem}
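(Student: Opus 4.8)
The plan is to run the standard descent-lemma argument for stochastic gradient descent on a $\beta$-smooth (non-convex) objective, but to chain it across the entire sequential stream of blocks $(\calB_u^t)_{u,t}$ rather than over i.i.d.\ draws, and to exploit Assumption~\ref{ass:1} at the level of conditional expectations. Write $g_u^t = \nabla\hat{\calL}_{\calB_u^t}(\omega_u^t)$ for the block gradient and $e_u^t = \nabla\calL(\omega_u^t) - g_u^t$ for its deviation from the full gradient, so the update reads $\omega_u^{t+1} = \omega_u^t - \eta_u^t g_u^t$. First I would apply $\beta$-smoothness in its descent form $\calL(\omega_u^{t+1}) \le \calL(\omega_u^t) - \eta_u^t\langle\nabla\calL(\omega_u^t), g_u^t\rangle + \tfrac{\beta}{2}(\eta_u^t)^2\|g_u^t\|_2^2$, substitute $g_u^t = \nabla\calL(\omega_u^t) - e_u^t$ everywhere, and expand.

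The key algebraic step, and the one I expect to be the main algebraic obstacle, is to collect the two inner-product contributions in $\langle\nabla\calL,e_u^t\rangle$ (one from the gradient term, one from the squared-norm term) into a single net coefficient $\eta_u^t - \beta(\eta_u^t)^2$ \emph{before} invoking Assumption~\ref{ass:1}. Bounding the two cross terms separately wastes a factor $2\beta(\eta_u^t)^2a^2$ and replaces $(\tfrac12-a^2)$ by $(\tfrac12+a^2)$, spoiling the stated constant; combining them first and then applying part~2 of the assumption (after taking $\bE_{\calB_u^t\mid u}$, with $\omega_u^t$ measurable with respect to the history) yields exactly the coefficient $1-a^2-\beta\eta_u^t(\tfrac12-a^2)$ on $\bE\|\nabla\calL(\omega_u^t)\|_2^2$. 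This is where the step-size cap $\eta_u\le 1/(B\beta)$ enters: it guarantees $\beta\eta_u^t\le 1$, so the net cross-term coefficient $\eta_u^t(1-\beta\eta_u^t)$ is nonnegative and the one-sided bound of Assumption~\ref{ass:1} may be applied with the correct sign. Part~1 then controls the residual term $\tfrac{\beta}{2}(\eta_u^t)^2\bE\|e_u^t\|_2^2 \le \tfrac{\beta}{2}(\eta_u^t)^2\sigma^2$.

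Next I would take total expectation and sum the resulting one-step recursion over $t=1,\dots,|\calB_u|$ and over all users $u=1,\dots,N$. Because the last iterate of user $u$ is the initial iterate of user $u+1$, the loss differences telescope across the whole stream to $\calL(\omega_1^0) - \bE\,\calL(\omega_{\text{final}})$, which I bound below by $\calL(\omega_1^0) - \calL(\omega_*)$. Rearranging leaves $\sum_{u,t}\eta_u^t\bigl(1-a^2-\beta\eta_u^t(\tfrac12-a^2)\bigr)\bE\|\nabla\calL(\omega_u^t)\|_2^2$ on the left, bounded by $\calL(\omega_1^0)-\calL(\omega_*) + \tfrac{\beta\sigma^2}{2}\sum_{u,t}(\eta_u^t)^2$. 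Lower-bounding every iterate's gradient norm by the minimum over all visited iterates turns the weighted sum into that minimum times $\sum_{u,t}\eta_u^t(\cdots)$, and solving for the minimum (after clearing the factor $\tfrac12$ into the numerator) produces the first displayed bound.

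Finally, for the constant-step corollary I would substitute $\eta_u^t = c/\sqrt N$ and use $b\le|\calB_u|\le B$ to sandwich the total number of updates $T=\sum_u|\calB_u|$ between $Nb$ and $NB$: the numerator's variance term becomes at most $\beta\sigma^2 c^2 B$, while the denominator is at least $\tfrac{c}{\sqrt N}\cdot Nb$ times a lower bound on the bracket. Using $\beta\eta\le 1$ with $a<1/2$ gives $1-a^2-\beta\eta(\tfrac12-a^2)\ge \tfrac12 \ge \tfrac12(1-4a^2)$, and collecting the $\sqrt N$, $b$, and $c$ factors yields the claimed $O(1/\sqrt N)$ rate with the stated $(1-4a^2)$ factor. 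The one genuinely conceptual obstacle, beyond this bookkeeping, is the dependency structure: the telescoping and every application of Assumption~\ref{ass:1} must be justified through the tower property, treating $\omega_u^t$ as determined by everything observed before block $\calB_u^t$, precisely because the blocks of a single user are not mutually independent.
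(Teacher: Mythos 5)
Your proposal follows essentially the same route as the paper's proof: the descent lemma for $\beta$-smoothness, the decomposition of the block gradient via the deviation term, the collection of both cross terms into the single coefficient $\eta_u^t-\beta(\eta_u^t)^2$ before applying Assumption~\ref{ass:1}, the telescoping sum across the whole user stream, and the same constant-step bookkeeping via $1-a^2-\beta\eta(\tfrac12-a^2)\ge\tfrac12\ge\tfrac12(1-4a^2)$. No gaps; this matches the paper's argument step for step.
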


\begin{proof}
Since $\ell$ is a $\beta$ smooth function, we have for any $u$ and $t$:
\begin{align*}
    {\calL}(\omega_{u}^{t+1}) & \le {\calL}(\omega_{u}^t) + \langle\nabla {\calL}(\omega_{u}^t), \omega_u^{t+1} - \omega_u^{t} \rangle \\
    & + \frac{\beta}{2}(\eta_u^t)^2 \|\nabla \hat{\calL}_{{\cal B}_u^t}(\omega_u^t)\|_2^2
    = {\calL}(\omega_{u}^t)\\
    & - \!\eta_u^t \langle \nabla {\calL}(\omega_{u}^t), \nabla \hat{\calL}_{{\cal B}_u^t}(\omega_u^t) \rangle \!+   \!\frac{\beta}{2}(\eta_u^t)^2 \|\nabla \hat{\calL}_{{\cal B}_u^t}(\omega_u^t)\|_2^2
\end{align*}    
Following~\cite{lan2020first}; by denoting $\delta_u^t = \nabla \hat{\calL}_{{\cal B}_u^t}(\omega_u^t) - \nabla {\calL}(\omega_u^t)$, we have: 
 \begin{align}\label{eq:01}
    {\calL}(\omega_{u}^{t+1})   & \le {\calL}(\omega_{u}^t) - \eta_u^t \langle \nabla {\calL}(\omega_{u}^t), \nabla {\calL}(\omega_{u}^t) + \delta_u^t \rangle \nonumber\\
    & + \frac{\beta}{2}(\eta_u^t)^2 \|\nabla {\calL}(\omega_u^t) + \delta_u^t\|_2^2= 
    {\calL}(\omega_{u}^t) \nonumber\\
    & +  \frac{\beta(\eta_u^t)^2}{2}\|\delta_u^t\|_2^2 - \!\!\left(\eta_u^i - \frac{\beta(\eta_u^t)^2}{2}\right)\!\!\|\nabla {\calL}(\omega_{u}^t)\|_2^2 \nonumber\\
    & - \left(\eta_u^t - \beta (\eta_u^t)^2\right) \langle\nabla {\calL}(\omega_u^t), \delta_u^t\rangle 
\end{align}

\noindent Our next step is to take the expectation on  both sides of inequality~\eqref{eq:01}. According to  Assumption~\ref{ass:1}, one has for some $a\in [0, 1/2)$:
\begin{align*}
    \left(\eta_u^t - \beta (\eta_u^t)^2\right)\left|\mathbb{E} \langle\nabla {\calL}(\omega_u^t), \delta_u^t\rangle\right| \le\\
    \left(\eta_u^t - \beta (\eta_u^t)^2\right) a^2 \|\nabla \calL (\omega_u^t)\|_2^2, 
\end{align*}
where the expectation is taken over the set of blocks and users seen so far. 

Finally, taking the same expectation on both sides of inequality~\eqref{eq:01}, it comes:
\begin{align}\label{eq:02}
    {\calL}(\omega_{u}^{t+1}) &\le  {\calL}(\omega_{u}^{t}) + \frac{\beta}{2}(\eta_u^t)^2\mathbb{E}\|\delta_u^t\|_2^2 - \nonumber\\
    &\eta_u^t(1 - \beta\eta_u^t/2  - a^2|1 - \beta\eta_u^t|) \|\nabla {\calL}(\omega_{u}^t)\|_2^2 \nonumber\\
    & \le  {\calL}(\omega_{u}^{t}) + \frac{\beta}{2}(\eta_u^t)^2\|\delta_u^t\|_2^2 \nonumber\\
    & - \eta_u^t\underbrace{(1 - a^2 
    -    \beta \eta_u^t(1/2 - a^2))}_{:= z_u^t} \|\nabla {\calL}(\omega_{u}^t)\|_2^2 \nonumber\\
    & = {\calL}(\omega_{u}^{t}) + \frac{\beta}{2}(\eta_u^t)^2\|\delta_u^t\|_2^2 - 
    \eta_u^t z_u^t \|\nabla {\calL}(\omega_{u}^t)\|_2^2 \nonumber\\
    & = {\calL}(\omega_{u}^{t}) + \frac{\beta}{2}(\eta_u^t)^2\sigma^2 - 
    \eta_u^t z_u^t \|\nabla {\calL}(\omega_{u}^t)\|_2^2, 
\end{align}
where the second inequality is due to  $|\eta_u^t\beta|\leq 1$. Also, as $|\eta_u^t\beta|\leq 1$ and $a^2\in [0,1/2)$ one has $z_u^t>0$ for any $u,t$. Rearranging the terms, one has
\begin{align*}
    \sum_{u=1}^N\!\sum_{t=1}^{|{\cal B}_u|} \!\eta_u^t z_u^t \|\nabla {\calL}(\omega_{u}^t)\|_2^2 \le \!{\calL}(\omega_1^0)\! - \!{\calL}(\omega_*) +   \sum_{u=1}^N\!\sum_{t=1}^{|{\cal B}_u|}
    \!\frac{\beta \sigma^2 (\eta_u^t)^2}{2}
\end{align*}
and
\begin{align*}
    \min_{t, u} \|\nabla {\calL}(\omega_{u}^t)\|_2^2
    & 
    \le 
    \frac{{\calL}(\omega_1^0) - {\calL}(\omega_*) + \frac{\beta}{2}  \sum_{u=1}^N\sum_{t=1}^{|{\cal B}_u|}(\eta_u^t)^2 \sigma^2 }{\sum_{u=1}^N\sum_{t=1}^{|{\cal B}_u|} \eta_u^t z_u^t} \\
    & 
    \le \frac{{\calL}(\omega_1^0) - {\calL}(\omega_*) + \frac{\beta}{2}  \sum_{u=1}^N\sum_{t=1}^{|{\cal B}_u|}(\eta_u^t)^2 \sigma^2 }{\sum_{u=1}^N\sum_{t=1}^{|{\cal B}_u|} \eta_u^t (1 - a^2 - \beta \eta_u^t(1/2 - a^2))} 
\end{align*}
Where, $\omega_*$ is the optimal point. Then, using a constant step-size policy, $\eta_u^i = \eta$, and the bounds on a block size, $b\leq |{\cal B}_u|\leq B$, we get:
\begin{align*}
    \min_{t, u} \|\nabla &{\calL}(\omega_{u}^t)\|_2^2  \le \frac{{\calL}(\omega_1^0) - {\calL}(\omega_*) + \frac{\beta\sigma^2}{2}  N\sum_{u=1}^N\eta_u^2  }{b\sum_{u=1}^N\eta_u (1 - a^2 - \beta \eta_u (1/2 - a^2))} \\
    &
    \le  \frac{4{\calL}(\omega_1^0) - 4{\calL}(\omega_*) + 2\beta \sigma^2 B \sum_{u=1}^N\eta^2}{b(1 - 4a^2)\sum_{u=1}^N\eta} \\
    &
    \le 
    \frac{2}{b(1-4a^2)} \left\{\frac{2{\calL}(\omega_1^0) - 2{\calL}(\omega_*)}{N \eta} + \beta\sigma^2 B \eta\right\}. 
\end{align*}

Taking $\eta = c/\sqrt{N}$ so that $0 < \eta \le 1/\beta$, one has 
\begin{align*}
    \min_{t, u} \|\nabla {\calL}(\omega_{u}^t)\|_2^2 \le 
    \frac{2}{b} \frac{2 ({\calL}(\omega_1^0) - {\calL}(\omega_*))/c + \beta c \sigma^2 B}{(1-4a^2)\sqrt{N}}. 
\end{align*}
If $b = B = 1$, this rate matches up to a constant factor to the standard $O(1/\sqrt{N})$ rate of the stochastic gradient descent. 
\end{proof}

The analysis of momentum algorithm \SO{}$_m$
is slightly more involved and based on the Polyak-\L{}ojsievich condition \cite{Polyak63,karimi2016linear}. 
Based on the latter condition we provide an analysis on the convergence of \SO{}$_m$ in  \cite{DBLP:journals/jair/BurashnikovaMCL21}. Also, we notice that this strategy can be useful in analysis of multi-class classification problems \cite{joshi2017aggressive,maximov2018rademacher,maximov2018rademacher2} and complements earlier results on ranking algorithms convergence~\cite{moura2018heterogeneous,sidana2021user}.

\section{Experimental Setup and Results}
\label{sec:Exps}
\begin{table*}[hbt!]
    \centering
     \resizebox{.75\textwidth}{!}{\begin{tabular}{c|cccccc|cccccc}
    \hline
     &\multicolumn{6}{c|}{\ndcgfive}&\multicolumn{6}{c}{\ndcgten}\\
     \cline{2-13}
     &{\ML-1M}&\Out&\PANDOR&{\NetF}&{\kasandr}&{\RecS}&{\ML-1M}&\Out&\PANDOR&{\NetF}&{\kasandr}&{\RecS}\\
     \hline
     \MostPop  & .090&.011 &.005 &.056 & .002&.004& .130&.014 &.008 &.096 &.002&.007\\
     \ProdVec  & .758& .232& .078& .712& .012&.219& .842& .232&.080 &.770 &.012&.307 \\
     \MF       & .684&.612 &.300 & .795& .197&.317& .805& .684& .303& .834& .219&.396\\
    \batch     & .652& .583& .874& .770& .567&.353& .784& .658& .890& .849& .616&.468\\
     \BPR      & .776& \underline{.671}& .889& \underline{.854}& .603 &\bf{.575}&\underline{.863}& \underline{.724}& .905& .903& .650&\bf{.673}\\
     {\GRU}$+$ & .721& .633 & .843& .777& .760&.507& .833& .680&.862 &.854 &.782&.613\\
      {\caser} & .665& .585& .647& .750& .241&.225& .787& .658& .666& .834& .276&.225\\
     {\SASR}   & .721 & .645& .852& .819& .569&.509& .832& .704& .873& .883& .625&.605\\
     {\GCN} & \underline{.784} & .652 & \underline{.901} & .836 & \bf{.947} & .428 & \bf{.874} & .710 & \underline{.915} & .895 & \bf{.954} & .535\\
     {$\SO_m$}&.763&.674&.885&.857&.735&.492&.858&.726&.899&\underline{.909}&.765&.603\\
     {$\SO_b$}     & \bf{.788}& \bf{.710}& \bf{.904}& \bf{.866}& \underline{.791} &\underline{.563}& \bf{.874}& \bf{.755}& \bf{.917}& \bf{.914}& \underline{.815}&\underline{.662}\\

     \hline
    \end{tabular}
    }
    ~\\~\\     \resizebox{0.75\textwidth}{!}{\begin{tabular}{c|cccccc|cccccc}
    \hline
      &\multicolumn{6}{c|}{\mapfive}&\multicolumn{6}{c}{\mapten}\\
     \cline{2-13}
     &{\ML-1M}&\Out&\PANDOR&{\NetF}&{\kasandr}&{\RecS}&{\ML-1M}&\Out&\PANDOR&{\NetF}&{\kasandr}&{\RecS}\\
     \hline
     \MostPop  & .074&.007 &.003 &.039 & .002&.003& .083&.009 &.004 &.051 &.3e-5&.004 \\
     \ProdVec  & .793& .228& .063& .669& .012&.210& .772& .228&.063 &.690 &.012&.220 \\
     \MF       & .733&.531 &.266 & .793& .170&.312& .718& .522& .267& .778& .176&.306\\
    \batch     & .713& .477& .685& .764& .473&.343& .688& .477& .690& .748& .488&.356\\
     \BPR      & 
     .826& .573& .734& .855& .507&\bf{.578}&.797& .563& \underline{.760} & \underline{.835}& .521&\bf{.571}\\
     {\GRU}$+$ & .777& .513 & .673& .774& .719& .521& .750& .509&.677 &.757 &\underline{.720}&.500\\
      {\caser} & .718& .471& .522& .749& .186&.218& .694& .473& .527& .733& .197&.218\\
     {\SASR}   & .776 & .542& .682& .819& .480&.521& .751& .534& .687& .799& .495&.511\\
    {\GCN} & \bf{.836} & .502 & \bf{.793} & .835 & \bf{.939} & .428 & \underline{.806} & .507 & \bf{.796} & .817 & \bf{.939} & .434\\
     $\SO_m$&.816&\underline{.577}&.720&\underline{.857}&.644&.495&.787&\underline{.567}&.723&.837&.651&.494\\
     $\SO_b$     & \underline{.832}& \bf{.619}& \underline{.756} & \bf{.866}& \underline{.732}&\underline{.570}& \bf{.808}& \bf{.607}& .759& \bf{.846}& .747&\underline{.561}\\
     \hline
    \end{tabular}
    }
    \caption{Comparison between \MostPop, \ProdVec, \MF, \batch, \BPR{}, {\GRU}$+$, \SASR, \caser, and \SO{} approaches in terms of \ndcgfive{} and \ndcgten (top), and \mapfive{} and \mapten (down). Best performance is in bold and the second best is underlined.}
    \label{tab:online_vs_all_ndcg_1h}
\end{table*}

\paragraph{Datasets.} We report results obtained on six publicly available data\-sets, for the task of personalized Top-N recommendation on the following collections: \ML-1M~\cite{Harper:2015:MDH:2866565.2827872}, {\NetF}~\cite{Bennett07}, {\PANDOR}~\cite{pandor}, {\RecS}~ that is a sample based on historic XING data, {\kasandr}~\cite{sidana17} and a subset out of the {\Out}~ dataset from of the Kaggle challenge\footnote{\url{https://www.kaggle.com/c/outbrain-click-prediction}}.

\paragraph{Compared Approaches. } To validate the sequential learning approach described in the previous sections, we compared the proposed \SO{} algorithm\footnote{The source code is available at \url{https://github.com/SashaBurashnikova/SAROS}.} with the following approaches.  
\begin{itemize}
    \item {\MostPop} is a non-learning based approach which consists in recommending the same set of popular items.
    \item Matrix Factorization ({\MF}) \cite{Koren08}, decomposes the matrix of user-item interactions, by minimizing a regularized least square error between the actual value of the scores and the dot product over representations. 
    \item {\BPR} \cite{rendle_09} a stochastic  gradient-descent  algorithm, based  on  bootstrap  sampling  of  training  triplets, and {\batch} the batch version of the model.
   \item {\ProdVec} \cite{GrbovicRDBSBS15},   performs next-items recommendation based on the similarity between the representations of items using word2vec.
   
    \item {\GRU}$+$ \cite{hidasi2018recurrent}, learns model parameters by optimizing a regularized approximation of the relative rank of the relevant item which favors top ranked preferred items.
    \item {\caser} \cite{tang2018caser} embeds a sequence of clicked items into a temporal image and latent spaces and find local characteristics of the temporal image using convolution filters. 
    
    \item {\SASR} \cite{DBLP:conf/icdm/Kang18} uses  an attention mechanism to capture long-term semantics in the sequence of clicked items.

     \item {\GCN} \cite{He2020} is a graph convolution network which learns user and item embedding by linearly propagating them on the user-item interaction graph.

\end{itemize}

\paragraph{Evaluation Setting and Results. }




We compare the performance of all the approaches on the basis of the common ranking metrics, which are the Mean Average Precision at rank $K$ (\mapk) and the Normalized Discounted Cumulative Gain at rank $K$ (\ndcgk) that computes the ratio of the obtained ranking to the ideal case and allow to consider not only binary relevance as in Mean Average Precision.

Table \ref{tab:online_vs_all_ndcg_1h} presents {\ndcgfive} and {\ndcgten} (top), and {\mapfive} and {\mapten} (down)  of  all approaches over the test sets of the different collections.
The non-machine learning method, {\MostPop},  gives results of an order of magnitude lower than the learning based approaches. Moreover,  the factorization model {\MF} which predicts clicks by matrix completion is less effective when dealing with implicit feedback than ranking based models. We also found that embeddings of ranking based models are more robust than the ones found by {\ProdVec}. When comparing {\GRU}$+$ with {\BPR} that also minimizes the same surrogate ranking loss, the former outperforms it in case of {\kasandr} with a huge imbalance between positive and negative interactions. This is mainly because  {\GRU}$+$ optimizes an approximation of the relative rank that favors interacted items to be in the top of the ranked list while the logistic ranking loss, which is mostly related to the Area under the ROC curve \cite{Usunier:1121}, pushes up clicked items for having good ranks in average. However, the minimization of the logistic ranking loss over blocks of very small size pushes the clicked item to be ranked higher than the no-clicked ones in several lists of small size and it has the effect of favoring the clicked item to be at the top of the whole merged lists of items. Moreover,  it comes out that {\SO} is the most competitive approach; performing better than other techniques, or, is the second best performing method.

\section{Conclusion}\label{sec:Conclusion}

The contributions of this paper are twofold. First, we proposed {\SO},  a novel learning framework for large-scale Recommender Systems that sequentially updates the weights of a ranking function user by user over blocks of items ordered by time where each block is a sequence of negative items followed by a last positive one. The main hypothesis of the approach is that the preferred and no-preferred items within a local sequence of user interactions express better the user preference than when considering the whole set of preferred and no-preferred items independently one from another. The second contribution is a theoretical analysis of the proposed approach which bounds the deviation of the ranking loss concerning the sequence of weights found by both variants of the algorithm and its minimum in the general case of non-convex ranking loss.  Empirical results conducted on six real-life implicit feedback datasets support our founding and show that the proposed approach is highly competitive concerning state of the art approaches on \texttt{MAP} and \texttt{NDCG} measures.

\section*{Acknowledgements}
{AB is supported by the Analytical center under the RF Government (subsidy agreement 000000D730321P5Q0002, Grant No. 70-2021-00145 02.11.2021). YM is supported by LANL LDRD projects.
}
\balance
\bibliographystyle{named}
\bibliography{ijcai22}

\end{document}